\def\FULLVERSION{}
\theoremstyle{definition}
\newtheorem*{conjecture*}{Conjecture}
\newcommand{\vmin}{v^{\min}}
\newcommand{\vmax}{v^{\max}}
\newcommand{\cmin}{c^{\min}}
\newcommand{\cmax}{c^{\max}}
\newcommand{\ghat}{\widehat{G}}
\newcommand{\Gset}{\mathcal{G}}
\newcommand{\step}{\mathbbm{1}}
\newcommand{\uheur}{U_{\text{heur}}}
\newcommand{\ucoop}{U_{\text{coop}}}
\newcommand{\unet}{U_{\text{net}}}
\newcommand{\loss}{L_{\text{net}}}
\newcommand{\ia}{the agent}
\newcommand{\ua}{the principal}
\newcommand{\Ia}{The agent}
\newcommand{\Ua}{The principal}
\def\ProveMechanismB{}
\begin{document}
\title{Obtaining Costly Unverifiable Valuations from a Single Agent}
\ifdefined\FULLVERSION
\subtitle{Making an Appraiser Work for You}
\fi

\author{Erel Segal-Halevi}
\affiliation{%
  \institution{Ariel University}
  \city{Ariel} 
  \state{Israel} 
  \postcode{40700}
}
\email{erelsgl@gmail.com}
\author{Shani Alkoby}
\affiliation{%
  \institution{The University of Texas at Austin}
  \city{Austin}
  \state{Texas, USA} 
  \postcode{78712}
}
\email{shani.alkoby@gmail.com}
\author{Tomer Sharbaf}
\affiliation{%
  \institution{Israel Ministry of Finance}
  \city{Jerusalem}
  \state{Israel} 
  \postcode{91950}
}
\email{tomersrb@gmail.com}
\author{David Sarne}
\affiliation{%
  \institution{Bar-Ilan University}
  \city{Ramat-Gan}
  \state{Israel} 
  \postcode{52900}
}
\email{david.sarne@gmail.com}

\begin{abstract}
We consider the problem of a principal who needs to elicit the true worth of an object she owns from an agent who has a unique ability to compute this information. 
The correctness of the information cannot be verified by the principal, so it is important to incentivize the agent to report truthfully.
Previous works coped with this unverifiability by employing two or more information agents and awarding them according to the correlation between their reports.
In this paper we show that even with only one information agent truthful information can be elicited, as long as the object is valuable for the agent too.  In particular the paper introduces a mechanism that, under mild realistic assumptions, is proved to elicit the information truthfully, even when computing the information is costly for the agent. 
Moreover, using this mechanism, the principal obtains the truthful information incurring an arbitrarily small expense beyond whatever unavoidable costs the setting dictates.

\ifdefined\FULLVERSION
\else
\emph{Full version is available at {https://arxiv.org/abs/1804.08314}}
\fi
\end{abstract}

\keywords{Information elicitation; principal-agent; truthful mechanism}  % put your semicolon-separated keywords here!

\maketitle

\section{Introduction}
Often you own a potentially valuable object, such as an antique, a jewel, a used car or a land-plot, but do not know its exact value and cannot calculate it yourself. 
There are various scenarios in which you may need to know the exact object value. For example:
(a) You intend to sell the object and want to know how much to ask when negotiating with potential buyers.
(b) You want to know how much to invest in an insurance policy covering that object. 
(c) The object is a part of an inheritance you manage in behalf of your co-heirs, and you want to prove to them that you manage it appropriately. 
(d) The object is a land-plot that might contain oil, and you want to know whether to invest in developing it.
(e) You are a firm and required by law to include the value of assets you own in your periodic report.
(f) You are a government auctioning a public asset, and want to publish an accurate value-estimate in order to attract more firms to participate in the auction.
Moreover, you may be required by law (or by political pressure from your voters) to obtain and disclose its true value, to avoid accusations of corruption. 

A common solution in these situations is to buy the desired information from an
\emph{agent} with an expertise in evaluating similar objects \cite{alkoby2017s,Moldovanu1998Goethes,alkoby2014manipulating}. Examples of such agents are: antique-shop owners, jewelers, car-dealers, or oil-firms that own nearby land-plots and thus can drill and estimate the prospects of finding oil in your plot.
The problem is that, in many cases, the information is not verifiable: the information buyer (henceforth ``\ua{}'') cannot tell if the information received is correct. 
This results in a strong incentive for \ia{} to provide an arbitrary value whenever the extraction of the true value is costly or requires some effort, knowing \ua{} will not be able to tell the difference.
For example, if an antique agent gives you a low appraisal for an antique object, and you sell it for that low value, you will never know that you were scammed. 
%\Ia{} will not even suffer a reputation loss.

Even if the true value can be verified later on (e.g, due to unsuccessful drilling for oil), 
this might be too late --- the damage due to using the wrong value might be irreversible and \ia{} might be too far away to be punished. Our goal is thus to ---
\begin{quote}
\emph{
--- develop mechanisms that obtain the true value of an object by incentivizing an agent to compute and report it, even when it is \emph{costly} for the agent,
and even when the information is \emph{unverifiable} by the principal.
}
\end{quote} 
The literature on information elicitation usually makes one of two assumptions: either the information is verifiable by \ua{}, or there are two or more information agents such that reports can be compared with peer reports \citep{faltings2017game}. We study a more challenging setting in which the information is unverifiable, and yet there is only a single agent who can provide it. At first glance this seems impossible: how can \ia{} be incentivized to report truthfully if there is no other source of information for comparison?
We overcome this impossibility by allowing, with some small probability, the transfer of the object to the agent for some fee, as an alternative means of compensation (instead of directly paying \ia{} for the information).  This is applicable as long as \ia{} gains value from owning the object, i.e., both capable of evaluating the object and can benefit from owning it.  This is quite common in real life. For example, both the antique-shop owner and the car-dealer, who play the role of \ia{} in the motivating settings above, can provide a true valuation for the car/antique based on their expertise, and can also benefit from owning it (e.g. for resale). Similarly, the oil-firm who owns nearby sea-plots has access to  relevant information enabling it to calculate the true value of the plot in question, and will also benefit from owning that plot.

We use this principle of selling the object to \ia{} with a small probability for designing a mechanism for eliciting the wanted information. Our analysis proves that the mechanism is truthful whenever the information computation cost is sufficiently small relative to the object's expected value; the exact threshold depends on the prior distribution of the value. For example, when the object value is distributed uniformly, the mechanism is truthful whenever the computation cost is less  than $1/4$ of the expected object value, which is quite a realistic assumption
(Section \ref{sec:prod-costs}).

While our mechanism allows \ua{} to learn the true information, this information does not come for free: \ua{} ``pays'' for it by the risk of selling the object to \ia{} for a price lower than its value.
Our next goal, then, is to \emph{minimize \ua{}'s loss} subject to the requirement of true information elicitation. 
We show that our mechanism parameters can be tuned such that \ua{}'s expected loss is only slightly more than its computation cost.
This is an optimal guarantee, since \ua{} could not get the information for less than its computation cost even if she had the required expertise herself
(Section \ref{sub:min-loss}). 

We then show how the mechanism can be augmented to handle some extensions of the basic model.
These include the case where the object is divisible, when the delivery of information is costly, 
when the principal and the agent have different valuations for the object,
and when the cost the agent incurs when computing the information is unknown to the principal
(Section \ref{sub:variants}).

In addition to our main mechanism, we present two alternative mechanisms that differ in their privacy considerations, in the sequence of roles (in the resulting Stackelberg game), and in the nature of the decisions made by the different players.
Interestingly, despite their differences, all three mechanisms are equivalent in terms of the guarantees they provide.
This leads us to conjecture that these guarantees are the best possible
(Section \ref{sec:alternatives}). 
 
Previous related work is surveyed in Section \ref{sec:Related}. 
Discussion and suggested extensions for future work are given in Section \ref{sec:future}.

\section{The Model}
\label{sec:model}
There is a \emph{principal}
who needs to know the true value of an object she owns or a business opportunity available to her.
%NOTE:
%\footnote{In this paper, \ua{} is a female and \ia{} is a male.}
The monetary value of this object or opportunity for \ua{} is denoted by $v$. While \ua{} does not know $v$, she has a prior probability distribution on $v$, denoted by $f(v)$, defined over the interval $[\vmin,\vmax]$, with $0\leq\vmin<\vmax$.

\Ua{} can interact with a single \emph{agent}.
Initially, \ia{} too does not know $v$ and knows only the prior distribution $f(v)$. However, \ia{} 
has a unique ability to compute $v$, by incurring some cost $c\geq 0$. Initially we assume that the cost $c$ is common knowledge; in Section \ref{sec:unknown-cost} we relax this assumption.

\Ua{}'s goal is to incentivize \ia{} to compute and reveal the true $v$. However, \ua{} cannot verify $v$ and has no other sources of information besides \ia{}, so the incentives cannot depend directly on whether $v$ is correct. 

Similar to the \emph{common value} setting studied extensively in auction theory \citep{kagel1986winner}, our model assumes that the value of the object to both the principal and the agent is the same. 
A realistic example of this setting is when both \ua{} and \ia{} are oil firms: \ua{} owns an oil field but does not know its value, while \ia{} owns nearby fields and can gain information about the oil field from its nearby drills.
In Section \ref{sub:different-values} we relax this assumption and allow the two values to be different. 

The mechanism-design space available to \ua{} includes transferring the object/opportunity to \ia{}, as well as offering and/or requesting a payment to/from \ia{}. 
\Ua{} has the power to commit to the mechanism rules, i.e, \ua{} is assumed to be truthful. The challenge is to design a mechanism that will incentivize truthfulness on the side of \ia{} too.

\Ia{} is assumed to be risk-neutral and have quasi-linear utilities.
I.e, the utility of \ia{} from getting the object for a certain price is the object's value minus the price paid. If \ia{} calculates $v$, then the cost $c$ is subtracted from his utility too.

The primary goal of \ua{} is to elicit the exact value $v$ from \ia{}. Subject to this, she wants to minimize her expected \emph{loss}, defined as the object value $v$ (if transferred to \ia{}) minus the payments received.

\section{\large Truthful Value-Elicitation Mechanism}
\label{sec:prod-costs}
The mechanism most commonly used in practice for eliciting information is to pay \ia{} the cost $c$ (plus some profit margin) in money.  
However, when information is not verifiable, monetary payment alone cannot incentivize the agent to actually incur the cost of calculation and report the true value. 
%In fact any mechanism that does not involve the transfer of the object (and possibly some payment in exchange) apparently cannot be truthful, as \ia{} will always find the false reporting of an arbitrary value (i.e., without incurring the cost of computation) more beneficial.

Instead, in our mechanism, \ua{} ``pays'' to \ia{} by transferring the object to \ia{} with some small probability. 
The mechanism guarantees \ia{}'s truthfulness, meaning that, under the right conditions (detailed below), a rational agent will choose to incur all costs related to computing the correct value, and report it truthfully. The mechanism is presented as Mechanism  \ref{alg:2ndprice}. It is parametrized by a small positive constant $\epsilon$, and a probability distribution represented by its cumulative distribution function $G$. 

\begin{algorithm}
\begin{enumerate}
	\item \Ua{} secretly selects $r$ at random, distributed in the following way:
		\begin{itemize}
		\item With probability $\epsilon$, this $r$ is distributed uniformly in $[\vmin,\vmax]$;
		\item With probability $1-\epsilon$, this $r$ is distributed like $G(r)$.
		\end{itemize}
	\item \Ia{} bids a value $b$. 
	\item \Ua{} reveals $r$ and then:
	\begin{itemize}
		\item If $b\geq r$, 
		\ua{} gives the object to \ia{}, and \ia{} pays $r$ to \ua{}.
		\item If $b<r$,  no transfers nor payments are made.
	\end{itemize}
\end{enumerate}
\caption{Parameters: $\epsilon>0$: a constant, $G(\cdot)$: a cdf.
\label{alg:2ndprice}
}
\end{algorithm}
The underlying idea is to make \ia{} ``feel like'' in a Vickrey auction. For \ia{}, the random price $r$ is just like a second price in a Vickrey auction; therefore, if \ia{} knows $v$, it is optimal for him to bid $b=v$.
The challenge is to show that it is optimal for \ia{} to actually calculate $v$. This crucially depends on the selection of the cdf $G(r)$. Below we prove a necessary and sufficient condition for the existence of an appropriate $G(r)$.
We assume throughout the analysis that $\epsilon$ is positive but infinitesimally small (i.e, $\epsilon\to 0$).

\begin{theorem}
\label{trm:1}
There exists a function $G(r)$ with which Mechanism \ref{alg:2ndprice} is truthful, if-and-only-if $c<E_v[\max(0, v - E[v])]$.
One cdf with which the mechanism is truthful in this case is:
\begin{align}
\label{eq:g*}
G^*(r) = \step_{r>E[v]}
\end{align}
\end{theorem}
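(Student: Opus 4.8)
The plan is to collapse the agent's rich strategy space into a single scalar comparison and then optimize over the free cdf $G$. Since the mechanism makes the revealed $r$ play the role of a second price, the text already guarantees that, conditional on having computed $v$, bidding $b=v$ is optimal; so the only remaining question is whether the agent prefers to pay $c$ and learn $v$ at all. I would therefore write the distribution of $r$ induced by $(\epsilon, G)$ as $P$, and compute two quantities: the payoff of the honest strategy, $U_{\text{compute}} = E_v E_r[\max(0, v-r)] - c$, and the payoff of the best ``lazy'' strategy in which the agent skips the computation and submits a fixed bid $b$. Because $v$ and $r$ are independent, the lazy payoff is $E_r[(E[v]-r)\step_{r\le b}]$, which is maximized at $b = E[v]$, giving $U_{\text{lazy}} = E_r[\max(0, E[v]-r)]$. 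Any strategy that computes but misreports is dominated by honest bidding (the Vickrey property), and mixing cannot beat the better pure option, so the mechanism is truthful if and only if $U_{\text{compute}} \ge U_{\text{lazy}}$, i.e. $c \le E_r[\phi(r)]$ where $\phi(r) := E_v[\max(0, v-r)] - \max(0, E[v]-r)$.

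The heart of the argument is then to maximize the per-realization gain $\phi(r)$ over $r$, since $E_r[\phi(r)]$ is largest when the principal concentrates $r$ where $\phi$ is maximal. Writing $F$ for the cdf of $v$ and $\mu := E[v]$, I would first note that Jensen's inequality (convexity of $x \mapsto \max(0, x-r)$) gives $\phi(r) \ge 0$ everywhere, and that $\phi$ vanishes outside $[\vmin, \vmax]$. Differentiating, $\tfrac{d}{dr} E_v[\max(0, v-r)] = -(1-F(r))$, while $\max(0, \mu - r)$ has derivative $-1$ for $r<\mu$ and $0$ for $r>\mu$. Hence $\phi'(r) = F(r) > 0$ on $(\vmin, \mu)$ and $\phi'(r) = F(r)-1 < 0$ on $(\mu, \vmax)$, so $\phi$ attains its unique maximum at $r = \mu = E[v]$, with value $\phi(E[v]) = E_v[\max(0, v - E[v])]$. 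This identifies both the optimal shape of $G$ --- a point mass at $E[v]$, i.e. the stated $G^*(r) = \step_{r>E[v]}$ --- and the exact threshold appearing in the theorem.

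Finally I would assemble the two directions. For any admissible $P$ we have $E_r[\phi(r)] \le \phi(E[v])$, and because every admissible $P$ carries the $\epsilon$-uniform component it places positive mass on points where $\phi < \phi(E[v])$, so $E_r[\phi(r)] < \phi(E[v])$ strictly for each $\epsilon>0$; at the same time, under $G^*$ the value $E_r[\phi(r)] \to \phi(E[v])$ as $\epsilon \to 0$. Consequently, if $c < E_v[\max(0, v-E[v])]$ then taking $G = G^*$ and $\epsilon$ small enough yields $c \le E_r[\phi(r)]$ and the mechanism is truthful, whereas if $c \ge E_v[\max(0, v-E[v])]$ then $c > E_r[\phi(r)]$ for \emph{every} $G$ and every $\epsilon>0$, so no admissible $G$ makes the mechanism truthful. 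I expect the main obstacle to be the maximization step: recognizing that the lazy-agent baseline is exactly the ``Jensen gap'' $\max(0, E[v]-r)$ and proving that its deficit relative to $E_v[\max(0,v-r)]$ peaks precisely at $r = E[v]$; the $\epsilon$-perturbation bookkeeping that converts the weak bound into the strict inequality of the theorem is the secondary technical point.
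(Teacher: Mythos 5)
Your proof is correct, and while the setup (cooperative vs.\ lazy strategies, the Vickrey argument for $b=v$, the reduction to comparing $c$ against a net gain that is linear in the distribution of $r$) matches the paper's, you replace the paper's key maximization argument with a genuinely different and more elementary one. The paper maximizes $\unet(G)=E_v[\ghat(v)]-\ghat(E[v])$ over the convex, sup-norm-compact set of cdfs by invoking Bauer's maximization principle, characterizing the extreme points as $0$--$1$ step functions $G_t=\step_{r>t}$, and only then optimizing over the scalar $t$. You instead apply Fubini to write the same objective as $E_{r\sim G}[\phi(r)]$ with $\phi(r)=E_v[\max(0,v-r)]-\max(0,E[v]-r)$, and observe that the supremum of a linear functional $\int\phi\,dG$ over probability measures is simply $\sup_r\phi(r)$, attained by a point mass at the argmax. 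The unimodality computation you then do for $\phi$ (derivative $F(r)$ below $E[v]$, $F(r)-1$ above) is literally the same calculation the paper performs for $\unet(G_t)$ as a function of $t$, so the two proofs converge at the end; what your route buys is avoiding the functional-analytic machinery (compactness of $\Gset$ and the extreme-point characterization) entirely, at no loss of rigor. Your handling of the boundary is also slightly more careful: you note that the $\epsilon$-uniform component forces $E_r[\phi(r)]<\phi(E[v])$ strictly, which cleanly disposes of the case $c=E_v[\max(0,v-E[v])]$, whereas the paper simply works in the $\epsilon\to0$ limit with the strict condition $\unet(G)>c$. One cosmetic caveat: you should state the truthfulness condition as the strict inequality $c<E_r[\phi(r)]$ (as the paper does) rather than $c\le E_r[\phi(r)]$, since at equality the agent is indifferent between computing and not computing; this does not affect your conclusion because you take $\epsilon$ small enough that the inequality is strict anyway.
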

Before proving the theorem, we illustrate its practical applicability with some examples.
\begin{example}
\label{exm:applicability}
$v$ is uniform in $[0,2 M]$. Then $E[\max(0,v-E[v])]=M/4$ so Mechanism \ref{alg:2ndprice} is applicable iff $c < M / 4$ --- the object's appraisal cost should be less than one quarter of the object's expected value.
With the function $G^*$ in \eqref{eq:g*},
Mechanism \ref{alg:2ndprice} selects $r$ in the following way: with probability $\epsilon$, $r$ is selected uniformly at random from $[\vmin,\vmax]$; with probability $1-\epsilon$, $r = E[v]$.
\end{example}
\begin{example}
$v$ has a symmetric triangular distribution in $[0,2 M]$ with mean $M$.
% i.e, $f(v) = \step_{0\leq v\leq M} \cdot {v\over M^2} + \step_{M\leq v\leq 2 M} \cdot {2 M-v\over M^2}$.
Here, Mechanism \ref{alg:2ndprice} is applicable iff $c < M / 6$.
\end{example}
Both these conditions are realistic, since usually the cost of appraising an object is at least one order of magnitude less than the object's expected value.
For example, 
used cars usually cost tens of thousands of dollars (even very cheap ones cost at least \$2000), and 
%by www.edmunds.com/car-buying/inspect-that-used-car-before-buying.html , 
the cost of a pre-purchase car inspection generally ranges from \$100 to \$200. Similarly, a used engagement ring typically costs thousands of dollars, while 
%by www.worthy.com/appraisal/jewelry/cost, 
hourly rates of a diamond ring appraisals range from \$50 to \$150.

\ifdefined\FULLVERSION
\begin{itemize}
\item If $v$ is distributed uniformly in $[0,M]$, then Mechanism \ref{alg:2ndprice} is applicable iff $c < M / 8$.
%(*) If $v \sim \text{Uniform}[-M/2,M/2]$. Then the condition becomes $c < {3 a M / 64}$. The condition is stricter than in the positive case since, intuitively, when the object's value might be negative, it is more difficult to use it as an incentive.
\item If $v$ is a discrete random variable that equals $0$ with probability $q$ and $M$ with probability $1-q$, then Mechanism \ref{alg:2ndprice}
 is applicable iff  $c < q(1-q)M$. 
\item If $v$ is distributed exponentially with mean $\lambda$, then the mechanism is applicable iff 
$c < \lambda / e$  (where $e\approx 2.7..$).
%(*) Suppose $v$ is a discrete random variable that equals $\vmin$ with probability $q$ and $\vmax$ with probability $1-q$. Then:
%\begin{itemize}
%	\item  When $\vmin\geq 0$ we get:
%	$c_i < q(1-q)(\vmax-\vmin)$.
%	\item 
%	When $\vmin < 0$ condition \eqref{eq:cost-condition} is 
%	$c_i < q(1-q)\vmax$.
%\end{itemize}
\item If $v$ is distributed exponentially with mean $\lambda$, then the mechanism is applicable iff 
$c < \lambda / e$  (where $e\approx 2.7..$), which can be quite high.
\end{itemize}
It is reasonable to assume that the cost $c$ of calculating an object's value is at least one order of magnitude lower than its maximum possible value (for example, the cost of appraising a car is at most several hundreds while the maximum possible value of a car might be hundreds of thousands).
Thus, the mechanism is applicable in the uniform and exponential case, and also in the discrete case when $q\in[0.12,0.88]$.
\fi

\begin{proof}[Proof of Theorem \ref{trm:1}]
\Ia{} has essentially two possible strategies. We call them, following \citet{faltings2017game}, ``cooperative'' and ``heuristic'':
\begin{itemize}
	\item In the cooperative strategy, \ia{} computes $v$ and uses the result to determine a bid $b(v)$.
	\item In the heuristic strategy, \ia{} does not compute $v$, and determines $b$ based only on the prior $f(v)$.
\end{itemize}
\Ia{} will use the cooperative strategy iff its expected utility is larger than the expected utility of the heuristic strategy by more than $c$. Therefore in the following paragraphs we calculate the expected utility of \ia{} in each strategy, showing that under the condition given in the theorem there always exists a cdf $G$ for which the above holds and otherwise the condition cannot hold. For the formal analysis we denote by $\ghat$ the integral of $G$: $\ghat(v) := \int_{r=0}^v G(r)dr$.

In the \emph{cooperative} strategy, \ia{} gets the object iff $r \leq b(v)$, and then his utility is $v-r$. Therefore his expected utility is: 
\footnote{
We consider only cdfs $G$ that are continuous and differentiable almost everywhere, so $G'$ is well-defined almost everywhere. In points in which $G$ is discontinuous (i.e., has a jump), $G'$ can be defined using Dirac's delta function.
}
\begin{align*}
\int_{r=0}^{b(v)} \bigg[(1-\epsilon)G'(r) + {\epsilon\over \vmax-\vmin}\bigg] (v-r)dr
\end{align*}
The integrand is positive iff $r<v$. Therefore the expression is maximized when $b(v)=v$, and hence it is a strictly dominant strategy for \ia{} to bid $v$. In this case, his utility is 
$(1-\epsilon)\int_{r=0}^{v}G'(r)(v-r)dr
 + 
{\epsilon\over \vmax-\vmin}\int_{r=0}^{v}(v-r)dr
$. We assume that $\epsilon\to 0$,%
\footnote{
If $\epsilon=0$, then reporting the true $v$ is only a weakly-dominant strategy: \ia{} never gains from reporting a false value, but may be indifferent between false and true value. For example, if the true value is $2$ and the cdf is uniform in $[3,5]$ and zero elsewhere, then \ia{} is indifferent between reporting $1$ and reporting $2$, since in both cases he loses the object with probability 1.  Making $\epsilon$ even slightly above 0 prevents this indifference and makes reporting $v$  strictly better than any other strategy.

However, to attain this strict-truthfulness, it is sufficient to have $\epsilon$ arbitrarily small. Hence, in the following analysis we assume for simplicity that $\epsilon\to 0$.
}
so that the gain is approximately $\int_{r=0}^{v}G'(r)(v-r)dr$.
By integrating by parts, one can see that this expression equals $\ghat(v)$. 
Hence, before knowing $v$, the expected utility of \ia{} from the cooperative strategy, denoted $\ucoop(G)$, is:
\begin{align*}
\ucoop(G) = E[\ghat(v)]
\end{align*}
where $E$ denotes expectation taken over the prior $f(v)$.

In the \emph{heuristic} strategy, \ia{}'s expected utility as a function of the bid $b$ is:
\begin{align*}
&E\left[\int_{r=0}^{b} [(1-\epsilon)G'(r) + {\epsilon\over \vmax-\vmin}](v-r)dr\right]
\\
=&
\int_{r=0}^{b} [(1-\epsilon)G'(r) + {\epsilon\over \vmax-\vmin}](E[v] -r)dr.
\end{align*}
The integrand is positive iff $r< E[v]$, so it is a strictly dominant strategy for \ia{} to bid $b=E[v]$. In this case, his gain when $\epsilon\to 0$, denoted $\uheur(G)$, is:
\begin{align*}
\uheur(G) = \ghat(E[v])
\end{align*}
The \emph{net utility} of \ia{} from being cooperative rather than heuristic is the difference:
{\small
\begin{align}
\label{eq:unet}
\unet(G) := \ucoop(G) - \uheur(G)
= E[\ghat(v)] - \ghat(E[v])
\end{align}
}
The mechanism is truthful iff $\unet(G) > c$, i.e, the net utility of \ia{} from being cooperative is larger than the cost of being cooperative.  Therefore, it remains to show that there is a cdf $G$ satisfying $c < \unet(G)$, iff $c < E_v[\max(0, v-E[v])]$.
This is equivalent to showing that $E_v[\max(0, v-E[v])]$ is the maximum possible value of $\unet(G)$, over all cdfs $G$. This is a non-trivial maximization problem since we have to maximize over a set of functions. 
We first present an intuitive solution and then a formal solution.

Intuitively, to maximize $E[\ghat(v)] - \ghat(E[v])$ we have to make $\ghat(E[v])$ as small as possible, and subject to that, make $\ghat$ as large as possible. The smallest possible value of $\ghat$ is 0, so we let $\ghat(E[v])=0$. 
Therefore we must have $G(r)=0$ for all $r\leq E[v]$.
Now, to make $\ghat$ as large as possible, we must let it increase at the largest possible speed from $E[v]$ onwards; therefore we must make its derivative $G$ as large as possible, so we let $G(r)=1$ for all $r>E[v]$. All in all, the optimal $G$ is the step function: $G^*(r) = \step_{r>E[v]}$, which gives $\unet(G^*) = E[\max(0,v-E[v])]$ as claimed.

To prove this formally, we use mathematical tools that have been previously used in the analysis of revenue-maximizing mechanisms \citep{manelli2007multidimensional,hart2017approximate}. In particular, we use Bauer's maximization principle:
\begin{quote}
In a convex and compact set, every linear function has a maximum value, and it is attained in an extreme point of the set --- a point that is not the midpoint of any interval contained in the set.
\end{quote}
Denote by $\Gset$ the set of all cumulative distribution functions with support contained in $[\vmin,\vmax]$.
$\Gset$ is a \emph{convex} set, since any convex combination of cdfs is also a cdf. Moreover, it is \emph{compact} with respect to the sup-norm topology (see \citet{manelli2007multidimensional}).
The objective function $\unet(\cdot)$ is  \emph{linear}.
Therefore, to find its maximum value it is sufficient to consider the extreme points of $\Gset$.
We claim that the only extreme points of $\Gset$ are \emph{0-1 step functions} --- functions $G$ for which  $G(r)\in\{0,1\}$ for all $r$. Indeed, suppose that $G$ is not a step function, so there is some $r_0$ for which $0 < G(r_0) < 1$.
Then the following two functions are different elements of $\Gset$:
\begin{align*}
G_1(r) &= \min(1, 2 G(r))
\\
G_2(r) &= \max(0, 2 G(r) - 1)
\end{align*}
$G$ is the midpoint of the segment $G_1$--$G_2$ (see the figure below), so $G$ is not an extreme point of $\Gset$.

\begin{center}
\includegraphics[height=.5\columnwidth,width=.7\columnwidth]{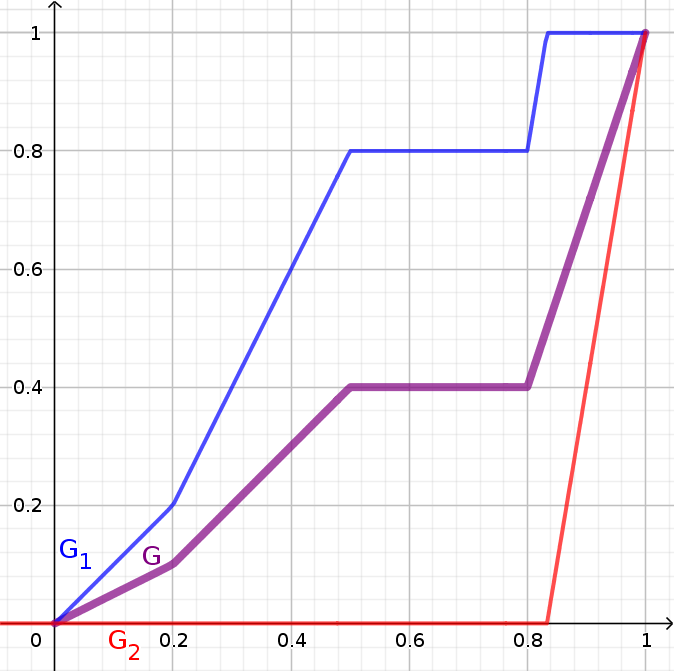}
\end{center}

Therefore, it is sufficient to maximize $\unet$ on cdfs of the following form, for some parameter $t\in[\vmin,\vmax]$:
\begin{align*}
G_t(r) =  \step_{r>t}
\end{align*}
Integrating $G_t(r)$ yields $\ghat_t(v) = \max(0, v-t)$, so by \eqref{eq:unet}:
\begin{align*}
\unet(G_t) = 
&
E[\max(0, v - t)]
-
\max(0, E[v] - t)
\\
=
&
\int_{v=t}^{\vmax} (v - t) f(v) dv
+
\min(0, t - E[v])
\end{align*}
To find the $t$ that maximizes $\unet(G_t)$, we take its derivative with respect to $t$:
\begin{itemize}
\item The derivative of the leftmost term is $-\int_{v=t}^{\vmax}f(v) dv = -\Pr[v\geq t]$, which is always between $0$ and $-1$.

\iffalse
CALCULATION OF DERIVATIVE:

\begin{align*}
\int_{v=t}^{\vmax} (v - t) f(v) dv =
\int_{v=t}^{\vmax} v f(v) dv
+
(-t)\cdot  \int_{v=t}^{\vmax} f(v) dv
\end{align*}
Derivative of leftmost term is $- t f(t)$.

Derivative of rightmost term is $-1\cdot \int_{v=t}^{\vmax}f(v) dv + (-t)\cdot(-f(t))$.

Summing the two terms gives $- \int_{v=t}^{\vmax}f(v) dv$.
\fi

\item The derivative of the rightmost term is $1$ when $t < E[v]$ and $0$ when $t > E[v]$.
\end{itemize}
Therefore, $\unet(G_t)$ is increasing when $t < E[v]$ and decreasing when $t > E[v]$. Therefore its maximum is attained for $t = E[v]$ and it is $E[\max(0,v-E[v])]$ as claimed.
\end{proof}

\section{Minimizing \ua{}'s loss}
\label{sub:min-loss}

The function $G^*$ from the proof of 
Theorem \ref{trm:1}
allows \ua{} to elicit true information
for a large range of costs.
However, the information does not come for free: \ua{} ``pays'' for the information by the possibility of giving away the object. 
As stated earlier, obtaining the information is mandatory.
Hence, 
\ua{} naturally seeks to minimize the loss resulting from giving away the object.
In other words, from the set of all cdfs with which Mechanism \ref{alg:2ndprice} is truthful, \ua{} would like to choose a single cdf $G$ (possibly different than $G^*$) which minimizes her loss.

\Ua{} loses utility whenever \ia{} gets the object, i.e., whenever \ia{} bids $v>r$. In this case, \ua{}'s net loss is $v-r$. Therefore, the expected loss of \ua{} (when $\epsilon\to 0$), denoted $\loss(G)$, is:
\begin{align*}
\loss(G) := E_v\left[
\int_{r=0}^{v} G'(r)(v-r)dr
\right]
\end{align*}
A simple calculation shows that this expression equals $E_v[\ghat(v)]$, which is exactly $\ucoop(G)$ --- the utility of \ia{} from playing the cooperative strategy. This is not surprising as \ia{} and \ua{} are playing a zero-sum game.

To induce cooperative behavior, $\ucoop(G)$ must equal $c'$, for some $c' > c$. Therefore \ua{}'s loss must be $c'$ too. 
Fortunately, \ua{} can attain this loss even for $c'$ arbitrarily close to $c$.  We define:
\begin{align}
\notag
p_{c'} &:= {c'\over E_v[\max(0, v - E[v])]}
\\
\label{eq:gc}
G_{c'}(r) &:= p_{c'}\cdot \step_{r>E[v]} 
+
(1-p_{c'})\cdot \step_{r=\vmax}
\end{align}
With this $G_{c'}$, Mechanism \ref{alg:2ndprice} selects $r$ as follows. With probability $p_{c'}$, $r$ is selected using the function $G^*$ from \eqref{eq:g*}, which guarantees \ia{} a utility of $E_v[\max(0, v - E[v])]$. 
With probability $1-p_{c'}$, \ua{} chooses $r$ so large that \ia{} never gets the object. Therefore the expected utility of \ia{} is $p_{c'}\cdot E_v[\max(0, v - E[v])] = {c'}$.
Consequently the loss of \ua{} is ${c'}$ too. When ${c'}\to c$, \ua{}'s loss approaches the theoretic lower bound --- she obtains the information for only slightly more than its computation cost.

The probability that \ua{} has to sell the object is quite low in realistic settings. For example, consider the case mentioned in Example \ref{exm:applicability} where $v$ is uniform in $[0,2 M]$. Suppose the object is a car. 
Typical values are $c=$\$200
%\footnote{by www.edmunds.com/car-buying/inspect-that-used-car-before-buying.html} 
and $M=$\$40K. Here $p_c = c/(M/4) = 0.02$. So the information is computed and delivered truthfully with probability 1, whereas with probability 98\% \ua{} keeps the car (and incurs no loss), and with probability 2\% \ua{} sells the car for its expected value of $\$40K$ (and loses the difference $v-\$40K$).

An interesting special case is $c=0$, i.e., \ia{} already knows $v$.
In this case, \ua{} does not need to know the prior distribution $f(v)$ and can simply use
\begin{align}
\label{eq:g0}
G_0(r) := \delta + (1-\delta)\cdot \step_{r=\vmax}
\end{align}
If $\delta>0$, \ia{}'s net utility is positive so the mechanism is truthful; when $\delta\to 0$, \ua{}'s loss approaches 0.

\section{Variants and Extensions}
\label{sub:variants}
\ifdefined\FULLVERSION
The proposed mechanism can be augmented in various ways to support extensions of the underlying model. 
\subsection{Divisible objects}
So far we assumed the object is indivisible, so it is either sold entirely to \ia{} or not at all.
In general the object may be divisible. For example, it may be possible to sell only a part of an oil field, or only sell shares in the field's future profits. 
In this case, instead of using the function $G_{c'}(r)$ of \eqref{eq:gc}, 
we can run Mechanism \ref{alg:2ndprice} using the function $G^*(r)$, but if the object needs to be sold in step 3 --- only a fraction $p_{c'}$ of it is actually sold.
The analysis of this mechanism is the same --- cooperation is still a dominant strategy for \ia{} whenever $c<c'$, and \ua{}'s loss is still $c'$.
The advantage is that a risk-averse agent may prefer to buy a fraction $p_{c'}$ of the object with certainty, than to buy the entire object with probability $p_{c'}$.
Moreover, a risk-averse principal may prefer to keep a fraction $1-p_{c'}$ of the object with certainty, than to risk selling the entire object with probability $p_{c'}$.

\subsection{Cost of information delivery}
So far we assumed that the information production is costly, but once the information is available --- its delivery is free. 
In general, the information delivery might also be costly. For example, \ia{} might have to write a detailed report about how $v$ was calculated, and have the report signed by the firm's accountants.
\Ia{} incurs the cost of delivery whether the delivered information is true or false. 
\Ua{} can handle this case 
by promising \ia{} 
to pay the delivery cost if \ia{}
participates in Mechanism
 \ref{alg:2ndprice}. 
This makes the strategic situation of \ia{} identical to the situation analyzed in 
Section \ref{sec:prod-costs}. Mechanism \ref{alg:2ndprice} is still truthful whenever $c<E_v[\max(0,v-E[v])]$.
\Ua{}'s loss is the sum of the production and delivery costs, which is the smallest loss possible since \ia{}'s expenses must be covered.
\else
It is straightforward to extend our mechanism to handle divisible objects and cost of information delivery. 
Below we present some additional possible extensions.
\fi

\subsection{Different values}
\label{sub:different-values}
So far we assumed that the object's value is the same for \ua{} and \ia{}. In general the values might be different. %For example, suppose \ua{} is an oil firm and \ia{} is a geologist. While an oil field certainly has a positive value for a geologist (e.g. as an investment), it is arguably lower than its value for the oil firm.
For example, suppose \ua{} the owner of a used car and \ia{} is a car dealer.  While the dealer certainly has a positive value for owning the car, it may be lower (in case the dealer already has several cars of the same model) or higher (in case the dealer can fix the car at a reduced cost) than its value for the car owner.

Let $v_p$ be the object's value for \ua{} and $v_a$ its value for \ia{}. 
\Ia{}'s utility is calculated as before, using $v_a$ instead of $v$:
\begin{align}
\notag
\ucoop(G) = E[\ghat(v_a)]
\hskip 1cm
\uheur(G) = \ghat(E[v_a])
\\
\label{eq:ucoop-uheur-unet}
\unet(G) = E[\ghat(v_a)] - \ghat(E[v_a])
~~~~~~~~~~
\end{align}
Therefore, 
using the cdf $G^*(r) := \step_{r>E[v_a]}$,
Mechanism \ref{alg:2ndprice} 
is still truthful and elicits $v_a$, as long as the condition of Theorem \ref{trm:1} holds on $v_a$, i.e.:
\begin{align*}
c < E[\max(0, v_a-E[v_a])
\end{align*}
If $v_a$ and $v_p$ are correlated,
then \ua{} can use the knowledge of $v_a$ to gain some knowledge about $v_p$.

However, in contrast to the common-value setting, the game here is no longer zero-sum --- \ua{}'s loss does not equal \ia{}'s utility, so it may be larger or smaller than $c$.
\Ua{}'s expected loss is now:
\begin{align}
\notag
\loss(G) &:= E\left[
\int_{r=0}^{v_a} G'(r)(v_p-r)dr
\right]
\\
\notag
&=
E\big[\ghat(v_a) + (v_p-v_a)\cdot G(v_a)\big]
\\
\label{eq:loss}
&=
\ucoop(G) + E\big[ (v_p-v_a)\cdot G(v_a)\big]
\end{align}
So, when $v_p > v_a$  \ua{}'s loss is larger than \ia{}'s utility, and when $v_p < v_a$ it is smaller.

Suppose we want \ia{}'s net utility to be at least $c'$, for some $c'>c$.
Then, \ua{}'s minimization problem is:
\begin{align}
\notag
\text{minimize}&& E\big[\ghat(v_a)] + E\big[ (v_p-v_a)\cdot G(v_a)\big]
\\
\text{subject to}&&G\in\Gset \text{~~and~~} E[\ghat(v_a)] - \ghat(E[v_a]) \geq c'
\notag
\end{align} 

This is still a problem of minimizing a linear goal over a convex set of functions, so the minimum is still attained in the extreme points of the set. However, finding the extreme points and minimizing $G$ over that points is much harder. We leave it to the future.\footnote{
To get an idea of the loss magnitude,
consider a special case in which $v_a$ and $v_p$ are fully correlated: suppose there is a constant $a$ such that $v_a = a\cdot v_p$. Suppose also, for the sake of the example, that $v_p$ is distributed uniformly in $[0,2 M]$. 
Suppose \ua{} uses 
Mechanism \ref{alg:2ndprice} with the $G_{c'}$ of \eqref{eq:gc}.
Then, using the expressions in the text body, we find that \ua{}'s loss is at most $(3/a - 2) \cdot c'$.
So when $a=1$ \ua{}'s loss is exactly $c'$ (which may be very near $c$), but when $a<1$ the loss is more than $c'$ and when $a>1$ the loss is less than $c'$, as can be expected. It is interesting that the loss (when $a$ is fixed) is  linear function of $c'$. We do not know if this is true in general.
}

\ifdefined\fullversion
\fi

\subsection{Unknown Cost of Computation}
\label{sec:unknown-cost}
So far, we assumed that 
\ua{} knows the costs incurred by \ia{}.
This assumption is realistic in many cases.  For example, when the object is a car, the mechanic can reveal its condition by running a set of standard checks that consume a known amount of time, so their cost can be reasonably estimated. 
However, in some cases 
the cost might be known only to \ia{}.
In this subsection we assume that \ua{} 
only knows a prior distribution on $c$, 
given by pdf $h$ and cdf $H$, with support $[\cmin,\cmax]$. For simplicity we consider here the common value setting, $v_p = v_a = v$.

If \ua{} must get the information at all costs (e.g., due to regulatory requirements), then she can simply run Mechanism \ref{alg:2ndprice} with 
the cdf $G_{c'}$ of \eqref{eq:gc}, taking $c' = \cmax$. This ensures that \ia{} calculates and reports the true information whenever $\cmax < E[\max(0, v-E[v])]$, and \ua{}'s loss is $\approx \cmax$.

However, in some cases \ua{} might think that $\cmax$ is too much to pay for the information. In this case, it may be useful for \ua{} to determine a \emph{utility} of obtaining the information. We denote \ua{}'s utility from knowing the information by $u$, and assume that it is measured in the same monetary units as the function $\loss$ of Section \ref{sub:min-loss}. In other words, \ua{}'s loss is:
\begin{itemize}
\item $\loss(G)$ --- when she elicits the true value using Mechanism \ref{alg:2ndprice} with cdf $G$;
\item $u$ --- when she does not elicit the true value.
\end{itemize}
If $u<\cmax$, it is definitely not optimal for \ua{} to use Mechanism \ref{alg:2ndprice} 
with the cdf $G_{c'}$ taking $c'=\cmax$.
What \emph{should} \ua{} do in this case?

To gain insight on this situation, we compare it to \emph{bilateral trading}.
In standard bilateral trading, 
a single consumer wants to buy a physical product from a single producer; in our setting, \ua{} is the consumer, \ia{} is the producer, and the ``product'' is information. 
This is like bilateral trading, with the additional difficulty that the 
consumer cannot verify the ``product'' received.

%When buying a {physical product}, the case when the production cost $c$ is known is easy: the consumer can make a take-it-or-leave-it offer to buy the product for $c+\delta$, where $\delta\to 0$. A rational producer will accept the offer, so the consumer will get the product for almost its production cost. 

The case when the production cost $c$ is unknown in bilateral trading was studied by 
\citet{Baron1982Regulating}. They 
define the \emph{virtual cost function} of the producer by:
\begin{align*}
z(c) := c + {H(c)\over h(c)} && (c\in[\cmin,\cmax])
\end{align*}
(it is analogous to the virtual valuation function used in Myerson's optimal auction theory).
By Myerson's theorem, the expected loss of the consumer in any truthful mechanism equals the expected virtual cost of the producer in that mechanism, $E_c[z(c)]$. The ``loss'' of the consumer from not buying the product is her utility from having this product, which we denote by $u$. 
Therefore, to minimize her expected loss, the consumer should buy the product if-and-only-if $z(c) < u$. 

Under standard regularity assumptions on $h$, the virtual cost function $z(c)$ is increasing with $c$. 
In that case, the optimal mechanism for the consumer is to make a take-it-or-leave-it offer to buy the product for a price of:
\begin{align}
\label{eq:offer}
R_{z,u} =
\begin{cases}
0 & \text{if~} u < z(\cmin)
\\
z^{-1}(u) & \text{if~} z(\cmin)\leq u\leq z(\cmax)
\\
\cmax & \text{if~} z(\cmax) < u 
\end{cases}
\end{align}
With this mechanism, the producer agrees to sell the product iff $c<R_{z,u}$, which occurs iff $z(c)<u$.

We now return to our original setting, in which the ``product'' is the information about an object's value.
We emphasize that there are two values: the value for both agents of the object itself, which we denoted by $v$, and the value for 
\ua{} of \emph{knowing} $v$, which we denote here by $u$.
We assume that the object's value $v$ and the cost $c$ of computing $v$ are independent random variables.

Similarly to the setting of \citet{Baron1982Regulating}, 
\ua{} has to ensure that \ia{} sells the information iff $z(c)<u$, which happens iff $c<R_{z,u}$.
Analogously to equations \eqref{eq:gc}, we define:
\begin{align*}
p_{z,u} &= {R_{z,u}\over  E_v[\max(0, v - E[v])]}
\\
G_{z,u}(r) &= p_{z,u}\cdot \step_{r>E[v]} 
+
(1-p_{z,u})\cdot \step_{r=\vmax}
\end{align*}
\Ua{} has to run Mechanism \ref{alg:2ndprice} using $G_{z,u}$ as the cdf. 
As explained after equations \eqref{eq:gc}, 
this gives \ia{} an expected net utility of $R_{z,u}$, so \ia{} will agree to participate in the mechanism iff $c < R_{z,u}$. This decision rule of the agent is in fact the one that maximizes the expected utility of \ua{}.
\begin{example}
Suppose the cost $c$ is distributed uniformly in $[0,\cmax]$. 
Then, the virtual cost function is $z(c) = 2 c$, so $z^{-1}(u)=u/2$ and $R_{z,u} = \min(\cmax,u/2)$.

Consider first a physical product.
If $u>2 \cmax$, then the consumer offers $\cmax$, the producer always sells, and the consumer's loss is $\cmax$. If $u<2 \cmax$, then the consumer offers $u/2$ and the producer sells only if the cost is less than $u/2$. This happens with probability ${u\over 2 \cmax}$, so the consumer's expected loss is ${u\over 2 \cmax}\cdot {u\over 2} + (1-{u\over 2 \cmax})\cdot u = u\cdot(1-{u\over 4\cmax})$, which is less than $\cmax$.

Now suppose that the ``product'' is information about an object's value. Suppose that, a~priori, the object's value is distributed uniformly in $[0,2 M]$.
As calculated in Example \ref{exm:applicability}, 
in this case $E[\max(0, v-E[v])] = M/4$. 
We make the realistic assumption that $\cmax < M/4$ (the maximum possible cost for appraising an object is less than a quarter of the expected value of the object). Therefore the following expression defines a valid probability:
\begin{align*}
p_{z,u} = {
\min(\cmax, u/2)
\over
M/4
}
\end{align*}
\Ua{} should run Mechanism \ref{alg:2ndprice} with the cdf
$G_{z,u}(r) = p_{z,u}\cdot \step_{r>E[v]} 
+
(1-p_{z,u})\cdot \step_{r=\vmax}$.
The expected net utility of \ia{} from participating is $\unet(G_{z,u}) = \min(\cmax, u/2)$.

If $u>2 \cmax$, then $\unet(G_{z,u}) = \cmax$, so \ia{} always participates, and \ua{} always obtains the information for an expected loss of $\cmax$.

If $u<2 \cmax$, then $\unet(G_{z,u}) = u/2$ and it might be higher or lower than the actual cost $c$. If $c<u/2$ then 
\ia{} participates and \ua{} obtains the information for an expected loss of $u/2$; if $c>u/2$ then \ia{} refuses to participate and \ua{} does not obtain the information, so her loss is $u$. 
All in all, \ua{}'s expected loss is ${u\over 2 \cmax}\cdot {u\over 2} + (1-{u\over 2 \cmax})\cdot u = u\cdot(1-{u\over 4\cmax})$, which is less than $\cmax$.
\end{example}
%The interesting take-home message from the above analysis is that our Mechanism \ref{alg:2ndprice} reduces the problem of eliciting truthful unverifiable information to the problem of buying a physical product. This reduction works both when the production cost is known and when it is unknown.
%*** Erel, consider summarizing this subsection with a sentence or two. However not what you had here before about the reduction to buying a physical product (because as we have discussed, if so we should have started the analysis with a reduction and shorten it substantially) ***

\section{Alternative Mechanisms}
\label{sec:alternatives}
In addition to Mechanism \ref{alg:2ndprice},
we developed two alternative mechanisms for solving the same problem --- eliciting a true value from a single information agent.

In Mechanism \ref{alg:1stprice} below, the price of the object is not determined by \ua{} but rather calculated as a function of \ia{}'s bid, similarly to a first-price auction.

\begin{algorithm}
\begin{enumerate}
\item \Ia{} bids $b \in [\vmin,\vmax]$.
\item With probability
%$[G(b)+\epsilon \cdot b]$%
$G(b)$%
, \ia{} 
buys the object for:
\begin{align*}
b - {\ghat(b)/ G(b)}
\end{align*}
where $\ghat$ is the integral of $G$: $\ghat(b) = \int_{r=\vmin}^b G(r) dr$.
\end{enumerate}
\caption{
\label{alg:1stprice}
Parameters: $\epsilon>0$: a constant, $G(r)$: a cdf.
}
\end{algorithm}

In Mechanism \ref{alg:postedprice} below, there is no bid at all --- \ua{} publicly posts a price and \ia{} decides whether to buy the object at this price or not. 

\begin{algorithm}
\begin{enumerate}
\item \Ua{} publicly posts the price $t$.
\item \Ua{} asks \ia{} whether he wants to buy the object for $t$ or not.
\item If \ia{} say yes, then with probability $p$ he buys the object from \ua{} for $t$. Otherwise the object is not sold.
\item If the object is not sold in step 3, then \ua{} runs Mechanism \ref{alg:2ndprice} with the function $G_0$ of equation \eqref{eq:g0}.
\end{enumerate}
\caption{
\label{alg:postedprice}
Parameters: $t>0$ --- a constant,
$p\in[0,1]$ --- a probability.
}
\end{algorithm}

The three mechanisms are apparently different in various aspects such as the role of the different players in the underlying Stackelberg game (leader vs. follower), and whether or not there is a requirement for secrecy (Mechanism \ref{alg:2ndprice} requires to keep $r$ secret while Mechanism \ref{alg:1stprice} need no secrecy).
% and the number of steps required (Mechanism \ref{alg:postedprice} requires \ua{} to run also Mechanism \ref{alg:2ndprice} entirely).  
Interestingly, they are equivalent in the conditions they impose on the cost $c$ and \ua{}'s loss.
We present a proof sketch below.  Note that we consider the general case of different valuations of \ia{}  and \ua{} (as in Subsection \ref{sub:different-values}) --- \ia{}'s value is $v_a$ and \ua{}'s value is $v_p$. 

In \textbf{Mechanism \ref{alg:1stprice}}, \ia{}'s expected utility for bidding $b$ is:
\begin{align*}
%[G(b)+\epsilon b] 
G(b)
\cdot(v_a - b + \ghat(b)/G(b))
\end{align*}
\Ia{} 
calculates the optimal bid $b$ by solving an optimization problem.
\ifdefined\ProveMechanismB
The derivative w.r.t. $b$ is:
\begin{align*}
&
%[G'(b)+\epsilon] (v_a - b + \ghat/G)+ [G(b)+\epsilon b](-1 + (G^2-\ghat G')/(G^2))
G' \cdot (v_a - b + \ghat/G)+ G\cdot (-1 + (G^2-\ghat G')/(G^2))
\\
=
&
G'\cdot (v_a-b)
\end{align*}
Since $G$ is an increasing function, this expression is positive when $b<v_a$ and negative when $b>v_a$.
Therefore the expected utility of \ia{} is maximized by bidding $b = v_a$. In this case his expected utility is:
\begin{align*}
\ucoop(G) = 
E[G(v_a)\cdot (v_a - v_a + \ghat(v_a)/G(v_a))]
=
E[\ghat(v_a)]
\end{align*}
Similarly, when \ia{} does not compute $v_a$, his utility is optimized by bidding $b = E[v_a]$, which gives him a utility of:
\begin{align*}
\uheur(G) = \ghat(E[v_a])
\end{align*}
These utilities are exactly as in Mechanism \ref{alg:2ndprice} and Equation \eqref{eq:ucoop-uheur-unet}.
\else
A higher bid means a higher probability $G(b)$ of receiving the object but also a higher purchase price. If \ia{} knows $v_a$, then the solution to this optimization problem is $b = v_a$; otherwise it is $b = E[v_a]$. 
\Ia{}'s utilities when computing / not computing $v_a$ (i.e, the expressions $\ucoop, \uheur, \unet$) are exactly as in Mechanism \ref{alg:2ndprice} and Equation \eqref{eq:ucoop-uheur-unet}.
\fi
Therefore, Theorem \ref{trm:1} is valid as-is for Mechanism \ref{alg:1stprice}, and the mechanism is applicable iff $c < E[\max(0, v_a-E[v_a])]$, using the same cdf $G^*$ of \eqref{eq:g*}.

Moreover, \ua{}'s loss $\loss$ is exactly as in equation \eqref{eq:loss}. Therefore \ua{} has to solve the same optimization problem for minimizing the loss, and the minimal loss is the same. In particular, in the common value setting $v_a=v_p$, \ua{}'s loss can be made arbitrarily close to the information production cost $c$.

In \textbf{Mechanism \ref{alg:postedprice}}, 
in step 2, \ia{} has to decide whether to buy the object or not. Calculating the true value $v_a$ may help \ia{} decide:
\begin{itemize}
\item If \ia{} calculates $v_a$, he buys the object with probability $p$ iff $v_a > t$, so his utility is $E[p\cdot \max(0, v_a-t)]$.
\item Otherwise, he buys the object iff $E[v_a]>t$, so 
his expected utility is $p\cdot \max(0,E[v_a]-t)$.
\end{itemize}
In case \ia{} decides to calculate $v_a$, in step 3 the situation is similar to the $c=0$ case mentioned at the end of Section \ref{sub:min-loss} --- \ia{} already knows the information so the cost for calculating it \emph{now} is 0. Therefore, at step 4 \ua{} elicits the true information for almost zero additional loss.
Define the following functions (depending on the mechanism parameters $p,t$):
\begin{align*}
G(r) &:= p\cdot \step_{r>t}
\\
\ghat(v) &:= \int_{r=0}^v G(r) dr = p\cdot \max(0, v-t)
\end{align*}
With these definitions, 
\ia{}'s utilities when computing / not computing $v_a$ are exactly as in equation \eqref{eq:ucoop-uheur-unet}, and 
\ua{}'s loss in step 2 is the same as in equation \eqref{eq:loss}. So Theorem \ref{trm:1} is valid, and the mechanism is truthful iff $c < E[\max(0, v_a - E[v_a])]$, by taking $t=E[v_a]$ and $p>{c\over E[\max(0, v_a - E[v_a])]}$.

Additionally, \textbf{Mechanism \ref{alg:2ndprice}} itself can be extended by adding a probability of sale --- before actually selling the object to the agent for $r$, the principal tosses a biased coin with probability $p$ of success (where $p$ is a fixed parameter), and makes the sale only in case of success.
This extension is actually already supported by the current mechanism: for any cdf $G$ and parameter $p$, we can create a new cdf $G_p$ by putting a probability mass of $1-p$ on values larger than $\vmax$ and a probability mass of $p$ on the original $G$. This attains exactly the same effect as a sale with probability $p$, since with probability $1-p$, the $r$ will be so high that the object will never be sold. 
Hence, Theorem \ref{trm:1} applies to this extension too, so even with this generalization, the mechanism works iff $c < E[max(0,v-E[v])]$.

The fact that several different mechanisms lead to the same applicability conditions and loss expressions lead us to conjecture that these results are valid universally.

\begin{conjecture*}
	\label{conjecture}
	(a) There exists a mechanism for truthfully eliciting a single agent's value $v_a$, if-and-only-if:
	\begin{align*}
	c < E[\max(0, v_a-E[v_a])].
	\end{align*}
	
	(b) In any mechanism for truthfully eliciting $v_a$ from a single agent, 
	\ua{}'s loss is at least:
	\begin{align*}
	c + \min_{G\in \Gset_c}
	E[(v_p-v_a)\cdot G(v_a)],
	\end{align*}
	where the minimization is over all cumulative distribution functions satisfying $E[\ghat(v_a)] = c$.
\end{conjecture*}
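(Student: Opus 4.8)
The plan is to prove the conjecture by a revelation-principle argument that collapses every conceivable mechanism down to the single-parameter family already analysed in the proof of Theorem~\ref{trm:1}. The first step is to argue that, without loss of generality, we may restrict attention to \emph{direct} mechanisms in which, under the cooperative strategy, \ia{} reports his computed value $\hat v$ and the mechanism responds with an interim allocation probability $x(\hat v)$ (the probability that \ia{} ends up owning the object) together with an interim expected transfer $t(\hat v)$. Because \ia{} is risk-neutral with quasi-linear utility, his interim payoff from reporting $\hat v$ when his true value is $v_a$ is $v_a\cdot x(\hat v)-t(\hat v)$, which is \emph{linear} in $v_a$. This is exactly the single-parameter setting of Myerson, so incentive compatibility within the cooperative strategy forces $x(\cdot)$ to be non-decreasing and pins down \ia{}'s equilibrium payoff by the envelope formula $U(v_a)=U(\vmin)+\int_{\vmin}^{v_a}x(s)\,ds$.

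The second step is to read off a cdf from the allocation rule. Since $x(\cdot)$ is non-decreasing and takes values in $[0,1]$, it is (after the normalisation $x(\vmax)\le 1$) the cdf of some distribution on $[\vmin,\vmax]$; set $G:=x$, so that $\ghat$ is precisely the integral appearing in the envelope formula. Normalising the least type to zero utility (individual rationality), the cooperative payoff becomes $\ucoop=E[\ghat(v_a)]$, exactly as in \eqref{eq:unet}. The crucial observation for the heuristic strategy is that an uninformed \ia{} who reports a fixed $\hat v$ faces the \emph{same} linear objective with $v_a$ replaced by its mean, so his best heuristic report maximises $E[v_a]\cdot x(\hat v)-t(\hat v)$ and yields $\uheur=\ghat(E[v_a])$. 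Hence for \emph{any} mechanism the net incentive to compute equals $\unet(G)=E[\ghat(v_a)]-\ghat(E[v_a])$, and truthful acquisition requires $\unet(G)\ge c$. Theorem~\ref{trm:1} already establishes that $\max_{G\in\Gset}\unet(G)=E[\max(0,v_a-E[v_a])]$, so a truthful mechanism can exist only when $c<E[\max(0,v_a-E[v_a])]$; sufficiency is witnessed by Mechanism~\ref{alg:2ndprice}. This proves part~(a).

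For part~(b), the same identification lets us write \ua{}'s loss in the universal form \eqref{eq:loss}, $\loss(G)=\ucoop(G)+E[(v_p-v_a)\cdot G(v_a)]$, now derived from an arbitrary mechanism rather than a specific one. To obtain the lower bound, I would argue that at an optimum one may take $\uheur=\ghat(E[v_a])=0$ (concentrating the allocation on values above $E[v_a]$ never hurts, since it leaves $\ucoop$ achievable while driving the heuristic payoff to zero) and $\ucoop$ exactly equal to $c$ in the limit $c'\to c$. Substituting $\ucoop=E[\ghat(v_a)]=c$ then turns loss minimisation into the constrained program $c+\min\{E[(v_p-v_a)G(v_a)]:\ E[\ghat(v_a)]=c\}$, which is precisely the claimed bound; since the objective is linear and $\Gset$ is convex and compact, Bauer's principle again guarantees the minimum is attained at an extreme (step-function) cdf.

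The hard part is the very first step: the revelation principle here is nonstandard because \ia{}'s ``type'' is not exogenous but \emph{acquired}, so one must rule out exotic mechanisms --- multi-round, adaptively randomised, or exploiting \ua{}'s commitment power in ways that correlate the allocation with the timing or manner of computation --- that do not obviously reduce to a single interim allocation rule $x(\cdot)$. Formalising a revelation principle for endogenous information acquisition, and in particular justifying that the uninformed agent's best response is always to mimic the type $E[v_a]$ (so that $\uheur=\ghat(E[v_a])$ holds for \emph{every} mechanism and not merely the three we analysed), is where the real work lies. A secondary obstacle, specific to part~(b), is the different-values case: showing that one may restrict to $G$ with $\uheur=0$ and $\ucoop=c$ without increasing the loss is immediate in the zero-sum common-value setting but must be argued carefully when $v_p\neq v_a$, since then lowering $\ucoop$ and reshaping $G$ interact non-trivially through the term $E[(v_p-v_a)\cdot G(v_a)]$.
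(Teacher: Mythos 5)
This statement is explicitly left \emph{unproven} in the paper: it is stated as a conjecture, supported only by the observation that three superficially different mechanisms all yield the same threshold and loss, and Section~\ref{sec:future} lists ``settling the conjecture'' as future work. So there is no paper proof to compare against; what you have written is a proposed attack on an open problem, and it should be judged on its own merits.

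On those merits, the core of your reduction is sound \emph{for static direct mechanisms}: if a mechanism is summarized by an interim allocation $x(\hat v)$ and transfer $t(\hat v)$, then IC among informed types gives the envelope formula $U(w)=U(\vmin)+\int_{\vmin}^{w}x(s)\,ds$, and --- this is the genuinely nice step --- the uninformed agent's problem $\max_{\hat v}\bigl(E[v_a]\,x(\hat v)-t(\hat v)\bigr)$ is literally the problem of the informed type $E[v_a]$, so $\uheur=U(E[v_a])$ and $\unet=E[\ghat(v_a)]-\ghat(E[v_a])$ regardless of $U(\vmin)$; combined with $x\le 1$ this gives $\unet\le E[\max(0,v_a-E[v_a])]$ directly, which would settle the ``only if'' of part~(a) for that class. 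The gap is exactly where you locate it: no revelation principle is established for this setting, in which the agent's type is \emph{acquired} by a hidden costly action rather than exogenous. A general mechanism may be multi-round and adaptively randomized; the agent may decide whether to compute $v_a$ conditional on the principal's earlier moves, or randomize between computing and not; ruling all of this out (or invoking and adapting a generalized principal--agent revelation principle in the style of Myerson's 1982 coordination-mechanism framework) is the actual content of the conjecture, and your sketch does not supply it. Part~(b) has a second, unacknowledged hole: your own envelope formula gives $\ucoop=U(\vmin)+E[\ghat(v_a)]$, and ex-ante participation only forces $\ucoop-c\ge 0$, so a mechanism charging an entry fee can have $U(\vmin)<0$ and hence $E[\ghat(v_a)]=c-U(\vmin)>c$ while still achieving $\ucoop=c$; your claim that the minimization reduces to the constraint $E[\ghat(v_a)]=c$ therefore needs an argument that such fees never help, which is not immediate when $v_p\neq v_a$ because enlarging $\ghat$ reshapes $G$ and hence the term $E[(v_p-v_a)\cdot G(v_a)]$. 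As it stands the proposal is a plausible programme, not a proof, and the conjecture remains open.
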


%\subsection{Comparing the mechanisms}
While our main goal in showing three inherently different mechanisms is primarily theoretic (supporting our conjecture), there are some practical advantages for preferring the use of some of them in specific cases. For example, an advantage of Mechanism \ref{alg:1stprice} is that it does not require a mediator. Mechanism \ref{alg:2ndprice} requires a mediator to keep the reservation value $r$ secret and reveal it only after the agent's bid (and this recurs in Mechanism \ref{alg:postedprice} as it requires \ua{} to run also Mechanism \ref{alg:2ndprice} entirely). The mediator might collude with the agent and reveal the reservation value to him before the bid, allowing him to bid $r+\epsilon$ and win the object without giving any information. The mediator might also collude with the principal and reveal a false reservation value $b-\epsilon$ after hearing the bid $b$. In Mechanism \ref{alg:1stprice} no such problems arise.  

Additionally, Mechanism \ref{alg:2ndprice} requires \ia{} to believe that \ua{} really draws $r$ from the advertised distribution (or alternatively, use a third-party for doing the lottery). In Mechanism \ref{alg:1stprice} the lottery is much simpler: a probability $p=G(b)$ is calculated in a transparent manner, and the object is sold with probability $p$. Such lottery can be carried out transparently in front of \ia{}, so no trust is required. 

On the other hand,  Mechanism \ref{alg:2ndprice} has the advantage that the optimal strategy of bidding $v$ is more intuitive. In Mechanism \ref{alg:1stprice}, once the agent knows $v$, he needs to solve a complex optimization problem in order to calculate the optimal $b$ and realize that it equals $v$. In contrast, in Mechanism \ref{alg:2ndprice}, once the agent knows $v$, it is easy to realize that it is optimal to bid $v$: 
by bidding higher he might buy the object at a price higher than its value, and by bidding lower he might miss buying the object at a price lower than its value.

\section{Related Work}\label{sec:Related}
Mechanisms by which an uninformed agent tries to elicit information from an informed agent are as old as King Solomon's judgment (I Kings:3).
About two centuries ago, the German poet Goethe
invented a mechanism for eliciting the value of his new book from his publisher \citep{Moldovanu1998Goethes},
but without considering the computation cost.

Various new mechanisms for truthful information elicitation have been studied in recent years,  including mechanisms based on \emph{proper scoring rules} \citep{armantier2013,hossain2013binarized},
the 
\emph{Bayesian truth serum} \citep{prelec2004bayesian,barrage2010penny,weaver2013creating},
and its variants
\citep{offerman2009truth,witkowski2012robust}, 
the \emph{peer truth serum}
\citep{radanovic2016peer},
the \emph{ESP game}, credit-based mechanisms \cite{hajaj2015strategy}, and similar \emph{output agreement} mechanisms \citep{waggoner2013information}.
See also \citet{kong2019information} for a recent unifying framework for several different kinds of mechanisms.
\citet{faltings2017game} provide a comprehensive survey of such mechanisms in the computer science literature.
They classify them into two  categories:
\begin{quote}
The principle underlying all truthful mechanisms is to reward reports according to consistency with a reference. (1) In the case of \emph{verifiable} information, this reference is taken from the
ground truth as it will eventually be available. (2) In the case of \emph{unverifiable} information, it will be
constructed from peer reports provided by other agents.
\end{quote}
The present paper provides a third category: the information is unverifiable, and yet there is a single agent to elicit it from.
\ifdefined\FULLVERSION
\footnote{
When there are many agents, our problem becomes easier. For example, with three agents the following mechanism is possible: (a) Offer each agent to sell you the information for $c+\delta$. (b) Collect the reports of all agreeing agents. (c) If one report is not identical to at least one other report, then file a complaint against this agent and send her to jail.   This creates a coordination game where the focal point is to reveal the true value, like in the ESP game.  In our setting there is a single agent, so this trick is not possible.
}
\fi

Interactions between an informed agent and an uninformed principal have also been studied extensively in economics.
A typical setting is that \ia{} is a seller holding an object and \ua{} is a buyer wanting that object (contrary to our setting, where \ua{} is the object owner).
In some settings, \ia{} is a manager of a firm and \ua{} is a potential investor. Common to all cases is that \ia{} holds information that may affect the utility of \ua{}, and the question is if and how \ia{} can be induced to disclose this information.

The seminal work of \citet{akerlof1970market}
shows that, when information is not verifiable and is not guaranteed to be correct (as in our setting), the incentive of \ia{}  to provide false information might lead to complete market failure. 
In contrast, 
if the information is ex-post verifiable (i.e, \ia{} can hide information but cannot present false information), then market forces may be sufficient to push \ia{} to voluntarily disclose his information \citep{grossman1980disclosure,grossman1981informational,hajaj2017selective}.
Mechanisms for information elicitation have been developed for settings where the information is 
verifiable \citep{hart2016evidence},
partially verifiable \citep{green1986partially,glazer2004optimal} or verifiable at a cost \citep{ben2014optimal,moscarini2003optimal,wiegmann2010multi,emek}.

An additional line of research assumes that the information is unverifiable, however, if it is purchased, it is always correct. Moreover, their goal is to maximize \ia{}'s revenue rather than minimize \ua{}'s loss \cite{babaioff2012optimal,alkoby2017b,alkoby2015s,sarne2014c,alkoby2015strategicAmec}.

Our work is also related to \emph{contract theory} \citep{bolton2005contract}, in which a principal tries to incentivize an agent to choose  an action that is favorable to \ua{}.
There, while \ua{} cannot observe \ia{}'s action, she \emph{can} observe the (probabilistic) outcome of his action. In contrast, in our setting \ua{} has no way of knowing whether or not \ia{} calculated the true information.

Our work is motivated by government auctions for oil and gas fields. A lease for mining oil/gas is put to a first-price sealed-bid auction. 
One of the participating firms owns a nearby plot and can, by drilling in its own plot, compute relevant information about the potential value of the auctioned plot. 
\citet{hendricks1994auctions}
show that, in equilibrium, the informed firm underbids and gains information rent, while the uninformed firms have zero expected value.
\citet{porter1995role}
provides empirical evidence supporting this conclusion from almost 40 years of auctions by the US government. It indicates that information asymmetry causes the government to lose about 30\% of the potential revenue.
As a solution, \citet{hendricks1993optimal}
suggest to exclude \ia{} from the auction and induce him to reveal the information by promising him a fixed percentage of the auction revenues.
However, they note that in practice it may be impossible to exclude a firm from a government auction.
Our mechanism provides a different solution: the government (\ua{}) can use our mechanism to elicit the information from the informed firm (\ia{}). Then it can release the information to the other firms and by this remove the information asymmetry.

The decision rule in our Theorem \ref{trm:1} is somewhat similar to the ones used in optimal stopping problems, e.g., the one derived by \citet{weitzman1979optimal} for Pandora's Box problem.  While the latter considers a single player and has no strategic aspect, our model considers a strategic setting. Still the essence of the decision is somehow similar as it consider the marginal expected profit from knowing the true value (as opposed to acting based on the best value found so far in Pandora's problem, or the expected value in our case).

\ifdefined\FULLVERSION
Mechanism \ref{alg:2ndprice} uses the reservation-price concept, which can be found in literature on auctions where agents have to incur a cost for learning their own value. 
For example, 
\citet{hausch1993common}
 discuss an auction for a single item with a common value. Each bidder incurs a cost for participating in the auction, and additional cost for estimating the value of the item.
% The second cost depends on the accuracy of the estimation: the agent can select $\alpha$ and learn the true value of the item with probability $\alpha$.
%With sufficient potential bidders, bidders expect a profit equal to their entrance fees plus information costs. Thus, overall, bidders make zero expected profit.
%The expected selling price is the expected value of the object less expected aggregate information and entry costs, so the seller indirectly pays these costs.
\citet{persico2000information}
discuss an extended model where the bidders can pay to make their estimate of the value more informative.
%They prove that the payoff in a first price auction is more risk-sensitive than in a second price; therefore, prior to a first-price auction, bidders will acquire more information compared to prior to a second-price auction.
\else
The mechanisms we propose make use of the reservation-price concept, which can be found in literature on auctions where agents have to incur a cost for learning their own value \citep{hausch1993common,persico2000information}. 
\fi
%Still, none of these works aim for mechanisms similar to ours, and naturally the essence of the use of this concept is different.

\section{Discussion and Future Work}
\label{sec:future}
Information-providers are now ubiquitous, enabling people and agents to acquire 
information of various sorts. 
As self-interested agents, information providers typically seek to maximize their revenue. This is where the failure of direct payment becomes apparent, especially when the information provided is non-verifiable.  
The importance of the mechanisms provided and analyzed in the paper is therefore in their guarantee for truthfulness in the information elicitation process.

%As discussed throughout the paper, prior work provided information elicitation means relying on the ability to verify the information provided by \ua{}, either directly or in the form of comparing it with several alternative reporting. The setting we use, in which the information is unverifiable, and yet there is only a single agent who can provide it, is far more challenging. To the best of knowledge we are the first to provide a mechanism that elicits the information truthfully from a single agent, even when computing the information is costly for the agent.  The fact that three very different mechanisms that are introduced in the paper lead to the same applicability conditions and loss expressions may suggest that these results are valid universally. 
An important challenge for future work is to study the theoretic limitations of the setting studied in this paper --- a single information-agent and unverifiable information. 
In particular, we conjecture that any truthful mechanism for this setting must sell the object with a positive probability, although we did not yet prove this formally. 
Settling the conjecture in Section \ref{conjecture} is an interesting challenge too.
Some other directions for future work are:

\paragraph{Unknown distribution of value}
Mechanism \ref{alg:2ndprice} requires to calculate $E_v[\max(0, v-E[v])]$, which requires knowledge of the prior distribution of $v$.
When the distribution is not known, 
truthfulness can be guaranteed only 
when $c=0$, since in this case $G$ can be chosen independently of the distribution (see end of subsection \ref{sub:min-loss}).  It is interesting whether true information can be elicited by a prior-free mechanism also for $c>0$.

\paragraph{Risk-averse agents}
The current model assumes that \ia{} is risk-neutral, so that his utility from a random mechanism is the expectation of the value.
It is interesting to check what happens in the common case in which \ia{} is risk-averse.
\ifdefined\FULLVERSION
Suppose there is an increasing function $u:\mathbb{R}\to \mathbb{R}$ that maps the value of \ia{} to his utility. So $u(0)=0$ (no value means no utility) and $u'(x)>0$ for all $x$ (more value always means more utility). When the agent is risk-neutral (as we assumed so far), $u'$ is constant; when the agent is risk-averse, $u'$ is decreasing. 
Then, when \ia{} is cooperative and computes $v$, he gets the object iff $r \leq b(v)$, and then his utility is $u(v-r)$. Therefore his expected utility is:
\begin{align*}
\int_{r=0}^{b(v)} \bigg[(1-\epsilon)G'(r) + {\epsilon\over \vmax-\vmin}\bigg] u(v-r)dr
\end{align*}
Since $u(v-r)$ is positive iff $v-r$ is positive, the integrand is positive iff $r<v$. Therefore the expression is maximized when $b(v)=v$, and hence it is still a strictly dominant strategy for \ia{} to bid $b=v$. 
This is encouraging, since it means that at least the second part of Mechanism \ref{alg:2ndprice} (revealing the value after it is computed) remains truthful regardless of \ia{}'s risk attitude. 
\else
It is easy to show that the second part of Mechanism \ref{alg:2ndprice} (revealing the value after it is computed) remains truthful regardless of \ia{}'s risk attitude. 
\fi
However, computing \ia{}'s utility in the cooperative vs. the heuristic strategy is much harder.

\paragraph{Different effort levels}
In our setting, \ia{} has only two options: either calculate the accurate value, or not calculate it at all. When $c>E_{v}[max(0, v-E[v])]$, it is optimal for the agent to not calculate the value at all. Then,
%(as proved in Theorem \ref{trm:1}), 
it is optimal for the agent to bid $E[v]$. It is never optimal for the agent to bid an inaccurate value.

In more realistic settings, the agent may have three or more options. For example, it is possible that the agent can, by incurring a cost $c' < c$, get an inaccurate estimate of $v$ (e.g., the agent learns some value $u$ such that the true value is distributed uniformly in $[u-d,u+d]$, where $d$ is the inaccuracy parameter).  Then, the analysis of \ia{}'s behavior becomes more complex since there are more paths in which \ia{} may decide to calculate the true value: he may decide to incur the cost $c$ already from the start, or incur only the cost $c'$, and after observing the results --- decide whether to incur an additional cost of $c$. 
\Ua{}'s goal is to learn the true value at the end --- regardless of how many intermediate calculations are done by \ia{}. It is interesting to characterize the mechanisms that let \ua{} attain this goal.

\section{Acknowledgments}
This paper benefited a lot from discussions with the participants of the industrial engineering seminar in Ariel University, the game theory seminar in Bar Ilan University, the game theory seminar in the Hebrew University of Jerusalem and the Israeli artificial intelligence day.
We are particularly grateful to Sergiu Hart and Igal Milchtaich for their helpful mathematical ideas.
%We are also grateful to the anonymous reviewers in AAMAS 2018, IJCAI 2018, AAAI 2019 and AAMAS 2019 for their helpful comments.
This research was partially supported by the Israel Science Foundation (grant No. 1162/17).

\newpage
\bibliographystyle{ACM-Reference-Format}
\balance  % do not change this line -- unless you manually balance your last page
\bibliography{information-elicitation}

\end{document}